\newcommand{\p}{\partial}
\newcommand{\dd}{{\rm d}}
\theoremstyle{theorem}
\newtheorem{theorem}{Theorem}
\theoremstyle{remark}
\newtheorem{remark}{Remark}
\theoremstyle{theorem}
\newtheorem{proposition}{Proposition}
\theoremstyle{lemma}
\newtheorem{lemma}{Lemma}
\theoremstyle{definition}
\begin{document}

\title{An equivalence of Finslerian relativistic theories}

\author{E. Minguzzi\thanks{
Dipartimento di Matematica e Informatica ``U. Dini'', Universit\`a
degli Studi di Firenze, Via S. Marta 3,  I-50139 Firenze, Italy.
 e-mail: ettore.minguzzi@unifi.it }}

\date{}

\maketitle

\begin{abstract}
\noindent 
In Lorentz-Finsler geometry it is natural to define
the Finsler Lagrangian over a   cone (Asanov's approach) or over the whole slit tangent bundle (Beem's approach). In the former case one might want to add differentiability conditions at the boundary of the (timelike) cone in order to retain the usual definition of lightlike geodesics. It is shown here that if this is done then the two theories  coincide, namely the `conic' Finsler Lagrangian is the restriction of a slit tangent bundle Lagrangian. Since causality theory depends on curves defined through the future cone, this work establishes the essential uniqueness of (sufficiently regular) Finsler spacetime theories and Finsler causality.
\end{abstract}


\section{Introduction}

 In this work we shall establish that any sufficiently regular Lorentz-Finsler Lagrangian defined over a cone $\mathscr{L}\le 0$ can be extended to the whole slit tangent bundle (Theorem \ref{odd}). This result is expected to be useful in both Lorentz-Finsler geometry and in Finslerian theories of gravity since it implies the equivalence of apparently different approaches. I am referring to the works who followed  Asanov's  conic approach \cite{asanov81,asanov85,pimenov88,bejancu00,barletta12,javaloyes13} and to the works who followed  Beem's slit tangent bundle approach \cite{beem70,beem76b,akbarzadeh88,bejancu99,perlick06,skakala09,minguzzi13c,minguzzi14c,voicu15}, and which can be considered, thanks to this work, as studies of the same theory. Some Finslerian approaches which cannot be easily comprised in this unified theory include \cite{pfeifer11,kostelecky11,lammerzahl12}.

  The mathematical methods  used in this work belong to Finsler geometry of indicatrices, affine differential geometry and analysis of convex functions \cite{li93,nomizu94,borwein10}. The reader is referred to \cite{minguzzi13c} for a more physically oriented bibliography on Finsler spacetime theories including proposals that have been advanced for the field equations.

Let $M$ be a paracompact, Hausdorff, connected, $n+1$-dimensional manifold. Let $\{x^\mu\}$ denote a local chart on $M$ and let $\{ x^\mu,v^\nu\}$ be the induced local chart on $TM$.
We start giving a quite general setting for Finsler spacetime theory, which we call the  {\em rough model}.

Let $\Omega$ be a  subbundle of  the slit tangent bundle, $\Omega \subset TM\backslash 0$, such that $\Omega_x$ is an open sharp convex cone for every $x$.
A {\em Finsler Lagrangian} is a map  $\mathscr{L}\colon \Omega \to \mathbb{R}$ which
is  positive homogeneous of degree two in the fiber coordinates
\[
\mathscr{L}(x,sv)=s^2 \mathscr{L}(x,v), \qquad \forall s>0.
  \]
  It is assumed that the fiber dependence  is $C^2(\Omega)$,
 that $\mathscr{L}<0$ on $\Omega$ and that $\mathscr{L}$ can be continuously extended setting $\mathscr{L}=0$ on $\p \Omega$.
The metric is defined as the Hessian  of $\mathscr{L}$ with respect to the fibers
\[
g_{\mu \nu}(x,v)= \frac{\p^2 \mathscr{L}}{\p v^\mu \p v^\nu},
\]
and in index free notation will be denoted with $g_v$ to stress the dependence on the velocity. This Finsler metric provides a map $g\colon \Omega \to  T^*M \otimes T^*M$. The manifold $(M,\mathscr{L})$ is called a {\em Finsler spacetime} whenever   $g_v$ is Lorentzian, namely of signature $(-,+,\cdots,+)$. By positive homogeneity we have $\mathscr{L}=\frac{1}{2} g_v(v,v)$ and $\dd L=g_v(v,\cdot)$. The usual Lorentzian-Riemannian case is obtained for $\mathscr{L}$ quadratic in the velocities.

A vector $v\in \bar \Omega$ is said to be future timelike, lightlike, or spacelike depending on the sign of $\mathscr{L}(x,v)$, respectively negative, zero, or positive. We  denote the sets of these vectors with $I^+=\Omega$, $E^+=\p \Omega$ and $J^+=\bar{\Omega}$, respectively. The observer space (indicatrix), or velocity space, is $\mathscr{I}^-_x=\{v\in T_xM\colon 2\mathscr{L}(x,v)=-1\}$. The condition $\mathscr{L}(x,v) \to 0$ for $v\to \p \Omega$ assures that the observer space is fully contained in $I^+_x$. In particular, it is non-compact and asymptotic to $E^+_x$.
Observe that the Finsler Lagrangian is defined just over a subset of the slit tangent bundle
as pioneered by Asanov \cite{asanov85}.

Beem's definition of Finsler spacetime is more demanding \cite{beem70}, as in his approach $\mathscr{L}$ is defined over $TM\backslash 0$. In this case the Finsler Lagrangian is {\em reversible} if $\mathscr{L}(x,-v)=\mathscr{L}(x,v)$.  Now  a selection has to be made of {\em future} timelike cone (for reversible Lagrangians there is always a time oriented double covering). It is known \cite{beem70,beem74,perlick06,minguzzi13c} that the set of timelike vectors is made by the union of disjoint open sharp convex cones.
  In \cite{minguzzi13c} we proved that for reversible Lagrangians of Beem's type and for $n\ge 2$, there are indeed two timelike cones at each point, exactly as in Lorentzian geometry \footnote{John Beem investigated this problem and believed to have found a counterexample \cite{beem74} which, under closer inspection, can be shown to be incorrect.}. The Finsler spacetime in Beem's sense is then a time orientable Lorentz-Finsler manifold.

  Of course, Beem's spacetime can be regarded as a particular case of the rough setting where $\Omega$ can be identified with the future cone $I^+$. The attractive features a Beem's approach stand on the $C^2$ differentiability of the Lagrangian at the boundary of $\Omega=I^+$. This makes it possible to define lightlike geodesics. Also the standard theory of Finsler connections \cite{matsumoto86,ingarden93, antonelli93,anastasiei96,minguzzi14c,szilasi14},  being based on the slit tangent bundle, is well adapted to Beem's framework.

One could try to improve the rough theory, while refraining from adopting Beem's approach, through the assumption of Beem's differentiability conditions at the boundary of the lightlike cone as done in \cite{javaloyes13}. However,  this strategy does not lead to a different physical theory since we prove

\begin{theorem} \label{odd}
Let $\mathscr{L}\colon I^+ \to \mathbb{R}$ be a Lorentz-Finsler Lagrangian in the rough sense, and suppose furthermore that
\begin{itemize}
  \item[$\dagger$:]  $\mathscr{L} \in C^k$, $2\le k\le \infty$, can be $C^k$ extended to $\p \Omega$ while preserving the signature of $g$ and in such a way that $\dd \mathscr{L}\ne 0$ on $\p\Omega$,
  \end{itemize}
then there is an extension $\mathscr{L}\colon TM\backslash 0 \to \mathbb{R}$ to a  $C^k$   Finsler Lagrangian, namely $\mathscr{L}$ is positive homogeneous of degree two and its fiber Hessian  has Lorentzian signature. Furthermore, the locus $\mathscr{L}< 0$ is at each point $x\in M$ the union of two open convex sharp cones whose closures do not intersect and whose union spans $T_xM$ (to be interpreted as the future and past timelike cones).
\end{theorem}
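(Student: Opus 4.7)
The plan is to construct the extension in three stages: (i) extend $\mathscr{L}$ across the future lightcone $\p\Omega$ using assumption $\dagger$; (ii) extend across the past lightcone $-\p\Omega$ by the reflection $v\mapsto -v$; (iii) glue the two partial extensions over the intermediate ``deep spacelike'' region. Stage (iii) is the main obstacle, as it must simultaneously preserve positive $2$-homogeneity, $C^k$-regularity, and Lorentzian signature of the fiber Hessian.

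For (i), $\dagger$ yields a $C^k$ extension of $\mathscr{L}$ to an open neighborhood $V$ of $\bar\Omega$ in $TM\setminus 0$ with Lorentzian Hessian and $\dd\mathscr{L}\ne 0$ on $\p\Omega$. Since $\mathscr{L}$ is positively $2$-homogeneous on $\bar\Omega$ (so Euler's identity $v^\mu \p_\mu \mathscr{L}=2\mathscr{L}$ holds there), I would re-homogenize the extension by fixing a $C^k$ cross-section transverse to $\mathbb{R}_+$-rays in $V$ and propagating radially via $\mathscr{L}(x,tv):=t^2\mathscr{L}(x,v)$ for $t>0$. Shrinking $V$ fiberwise if necessary, Lorentzian signature of the Hessian (an open condition) is preserved and we obtain a $C^k$, $2$-homogeneous extension $\mathscr{L}^+$ on an open conic neighborhood $U^+\supset\bar\Omega$. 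For (ii), set $\mathscr{L}^-(x,v):=\mathscr{L}^+(x,-v)$ on $U^-:=-U^+$; since $\Omega$ is sharp, $\bar\Omega\cap(-\bar\Omega)=\{0\}$ and $U^+,U^-$ can be chosen disjoint, so no consistency condition is needed.

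The main obstacle is stage (iii). Fix an auxiliary Riemannian fiber metric $q$ on $TM$ and let $S\subset TM\setminus 0$ be its unit sphere bundle. The restrictions of $\mathscr{L}^\pm$ to $S$ define a $C^k$ function on $S\cap(U^+\cup U^-)$; I would glue it to a $C^k$ function on all of $S$ using a partition of unity, then re-extend to $TM\setminus 0$ by $2$-homogeneity via $\mathscr{L}(x,v):=q(v,v)\,\mathscr{L}|_S(x,v/\sqrt{q(v,v)})$. The heart of the proof is the claim that the gluing can be chosen so that the fiber Hessian retains signature $(-,+,\dots,+)$: since Lorentzian signature is open and both $\mathscr{L}^\pm$ have Lorentzian Hessian, this requires controlling the cross-terms arising from derivatives of the partition and ensuring that the Lorentzian forms $g^+$ and $g^-$ admit a common timelike direction in the equatorial band (a condition that follows from propagating the timelike cone of $g_v$ continuously through $\p\Omega^\pm$). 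A more intrinsic approach would invoke the affine differential geometry of indicatrices: the $c>0$ level sets of $\mathscr{L}^\pm$ are strictly convex hyperboloid-like hypersurfaces, and one should be able to canonically continue them through the spacelike annulus as a smooth foliation, defining $\mathscr{L}$ by labeling each leaf with its value of $c$.

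With the extension in hand, the remaining claims are routine. The locus $\{\mathscr{L}<0\}$ coincides with $\Omega\cup(-\Omega)$ by construction; each is an open sharp convex cone by hypothesis, their closures meet only at the origin by sharpness of $\Omega$, and their union spans $T_xM$ since $\Omega$ has nonempty interior. Positive $2$-homogeneity and $C^k$-regularity are explicit in the construction, and the Lorentzian signature is the content of the gluing claim.
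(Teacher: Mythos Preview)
Your outline has a genuine gap at stage~(iii), which you yourself flag as ``the main obstacle'' but then leave essentially unaddressed. A partition-of-unity gluing of $\mathscr{L}^+$ and $\mathscr{L}^-$ on the sphere bundle produces, at the level of fiber Hessians, an expression of the form
\[
g_v=\varphi\, g_v^{+}+(1-\varphi)\, g_v^{-}+\big(\textrm{terms involving }\dd\varphi\otimes\dd\mathscr{L}^{\pm}\textrm{ and }\mathscr{L}^{\pm}\,H\varphi\big).
\]
Even ignoring the cross terms, a convex combination of two Lorentzian forms need not be Lorentzian: a common timelike direction guarantees at least one negative eigenvalue, not exactly one. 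You give no mechanism for finding a hyperplane on which both $g^+$ and $g^-$ are positive definite throughout the equatorial band, nor for controlling the $\dd\varphi$ and $H\varphi$ terms, whose sign you cannot choose. Your fallback suggestion (``canonically continue the convex indicatrix hypersurfaces through the spacelike annulus'') is in the right spirit but is not a proof; it is precisely the hard part. Finally, note that the theorem as stated does \emph{not} assert that the past cone is $-\Omega$; the paper explicitly remarks that reversibility of the extension could not be achieved, which is a warning sign that the reflect-then-glue strategy runs into a real obstruction.

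The paper's route is quite different and avoids partitions of unity altogether. It slices the cone by an affine hyperplane $u+N_x$, obtaining a function $\tilde{\mathscr{L}}$ on a convex set $C_x$, and observes (via the indicatrix/affine-metric correspondence, Theorem~\ref{lpo}) that Lorentzianity of $H\mathscr{L}$ on the half-space $v^0>0$ is equivalent to Lorentzianity of the Hessian of $\sqrt{\tilde{\mathscr{L}}}$ in projective coordinates. It then (a) extends $\tilde{\mathscr{L}}$ to all of $N_x$ using a convex-extension lemma (Lemma~\ref{mps}, based on Min~Yan's theorem) so that $\tilde{\mathscr{L}}=b\Vert{\bf v}\Vert^2-a$ far out; (b) fixes the signature on the intermediate annulus not by averaging two Lagrangians but by composing $\sqrt{\tilde{\mathscr{L}}}$ with a single increasing convex function $F$ chosen so that $F''/F'$ dominates a certain quotient, which forces the transverse direction to be timelike while the level sets (unchanged, and strictly convex) stay spacelike; (c) extends into $v^0\le 0$ by exploiting the rotational symmetry near infinity, reducing the problem to continuing a single convex function of one variable, and ending with an explicit quadratic $\mathscr{L}$ near the past cone. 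Step~(b) is the key idea you are missing: reparametrizing the level sets of a quasi-convex function gives direct control over the sign of $H(F\circ\sqrt{\tilde{\mathscr{L}}})$ in the transverse direction, something a partition of unity cannot do.
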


\begin{remark} \label{dih}
In the previous theorem $C^k$ differentiability over the closed set $\bar{\Omega}$ should be understood in Whitney's sense \cite{whitney34}. Then by  Whitney extension theorem $\mathscr{L}$ can be $C^k$ extended in a neighborhood of $\p I^+=E^+$. The condition $\dd \mathscr{L}\ne 0$ implies  that the boundary $E^+$ is a $C^k$ embedded manifold, and by continuity $\mathscr{L}$ has Lorentzian Hessian in a neighborhood of $E^+$. Theorem \ref{odd} states more, namely that for $k=2$ the Finsler metric can be continuously extended outside the cone all over $TM\backslash 0$ remaining Lorentzian; if $k=3$ the Cartan torsion can be continuously extended, if $k=4$ the Cartan curvature can be so extended, and so on. Of course these extensions are not unique since the extension of $\mathscr{L}$ is not unique. Observe that a necessary condition for the extension of the metric (Cartan torsion) is the $C^2$ (resp.\ $C^3$) differentiability of the light cone.
\end{remark}

 In conclusion, demanding the nice differentiability properties of Beem's theory at the lightlike cone leads to a Finsler Lagrangian which is the restriction of a Beem's Lagrangian to $I^+$. Thus one is forced either to work with the rough theory and so work out a different definition of lightlike geodesic, e.g.\ as the limit of timelike geodesics (see below), or essentially work with a Beem's Lagrangian, either restricted to a cone or not.
  This result is important for it proves that there is only one Finslerian gravity theory for sufficiently differentiable Lagrangians on the light cone.

We end the section summarizing recent progress on the geometrical aspects of Finsler spacetime theory.

In \cite{minguzzi13c} we proved that the Legendre map $v\to g_v(v,\cdot)$ is bijective along with two inequalities, announced  in \cite{minguzzi13d},  that had curiously passed unnoticed, namely the Lorentz-Finsler reverse Cauchy-Schwarz inequality and the Lorentz-Finsler  reverse triangle inequality. As it is there clarified these results hold also for the rough theory, since the proofs involve just the future cone.

In \cite[Sect.\ 1.4]{minguzzi13d} we showed   how to extend global causality theory to the Finslerian domain using Beem's definition. Indeed, we proved:
\begin{itemize}
 \item[(a)] existence of convex normal neighborhoods (Theor.\ 1.16), and
 \item[(b)] that curves that  move pointwisely slower than light move also locally slower than light (Theor.\ 1.23), in other words that lightlike geodesics are locally achronal.
\end{itemize}
From here standard causality theory follows \cite{beem96}
since proofs generalize word for word from the Lorentzian domain,
 as long as they do not involve tricky aspects related to curvature \cite{minguzzi13d}.
For instance, the chronological relation is open; two points connected by a  causal curve are either chronologically related or the curve is an achronal lightlike geodesic;  globally hyperbolic spacetimes are defined in the usual way and split as a product; the causal hierarchy of spacetimes holds unaltered, and so on.
The correspondence with the Lorentzian causality theory is so complete that it has  yet to be be found a significative instance where it fails. For instance, the notable singularity theorems of Lorentzian geometry generalize to the Finslerian framework \cite{aazami14,minguzzi15}.

In all these applications it is convenient to work with a Beem's Lagrangian if available. One main advantage is that in Beem's formulation there is a clear notion of convex neighborhood \cite{minguzzi13d}. While the existence of convex neighborhoods is not necessary for the development of the theory, it helps greatly since many Lorentzian proofs, being based on this concept, pass word for word to the Finslerian domain.
This work, by establishing the equivalence of the Beem's and the conic approaches, proves that  all the good properties of Beem's theory pass to the conic formulation.

\begin{remark}
It can be observed that the limit curve theorems \cite{beem96,minguzzi07c} can be applied safely in the rough Finslerian theory. One could rework and modify the Lorentzian proofs; but there is a simpler fast trick to see this fact. Any sharp convex cone is the intersection of a countable number of round (elliptic) cones. As a consequence, on any Finsler spacetime we can find Lorentzian metrics $g^k$ such that the future causal cone of  the Finsler spacetime at any point is the intersection of the future Lorentzian cones. Thus if a sequence of inextendible Finsler causal curves $\gamma_i$, accumulates to a point $p$, it  converges, by the Lorentzian limit curve theorem \cite{beem96,minguzzi07c}, to a limit curve $\gamma$ passing through $p$ which is $g^k$-casual for every $k$ and hence Finsler causal. Therefore, one can define Finsler lightlike geodesics in the rough theory as limits of timelike geodesics.
\end{remark}

\section{Extending the Finsler Lagrangian}

In order to obtain the extension we have to study some properties of  $C^2$ Finsler Lagrangians defined on the slit tangent bundle, so let $\mathscr{L}$ be defined all over $TM\backslash 0$. Let us define the subsets of   $TM\backslash 0$, of timelike vectors $I=\mathscr{L}^{-1}((-\infty,0))$, lightlike vectors $E=\mathscr{L}^{-1}(0)$ and causal vectors $J=I\cup E$.
Let $\mathscr{I}^+\subset (TM\backslash 0)\backslash J$ be the subbundle obtained through the condition $2\mathscr{L}=1$. This is called {\em spacetime indicatrix} and $\mathscr{I}^+_x=\mathscr{I}^+\cap T_xM$ is the spacetime indicatrix at $x$. By positive homogeneity the Finsler Lagrangian is determined outside $J$ by the spacetime indicatrix, indeed
\[
\mathscr{L}(x,v)=s^2/2, \ \textrm{where} \ v/s\in \mathscr{I}^+_x .
\]
Let us recall a few well known facts \cite{laugwitz11,beem70} on the relationship between Finsler metric at a given point $x\in M$ and second fundamental form (affine metric) of $\mathscr{I}^+_x\subset T_xM$.

The Liouville vector field on $L:T_xM\to TT_xM$ is given by  $v\in TT_xM$ at $v\in T_xM$ and is transverse to the indicatrix
because $\frac{\p \mathscr{L}}{\p v^\alpha} v^\alpha=2\mathscr{L}=1\ne 0$. Let $D$ denote the usual derivative on $T_xM$ due to its affine structure, and let $X$ and $Y$ be two vectors tangent to the indicatrix at $v\in \mathscr{I}^+_x$. Let us extend them in a neighborhood $U\subset T_xM\backslash 0$ of $v$ to two vector fields which are tangent to $\mathscr{I}^+_x$. Since $Y$ is tangent to the indicatrix we have over $U\cap \mathscr{I}^+_x$, $\frac{\p \mathscr{L}}{\p v^\alpha} Y^\alpha=0$. Thus we can define the affine metric $h$ (second fundamental form) in the sense of affine differential geometry \cite{nomizu94} with the Gauss equation for the minus-Liouville transverse field
\begin{align} \label{gau}
D_X Y&=\nabla_X Y - h(X,Y) v, \qquad (Gauss).
\end{align}
Here $\nabla$ is a covariant derivative induced over the indicatrix which is not necessarily the Levi-Civita connection of $h$ (the difference between the two connections is the cubic form \cite{nomizu94}).
Contracting with $\dd \mathscr{L}$ and using $\frac{\p \mathscr{L}}{\p v^\alpha} \, v^\alpha=2\mathscr{L}$ and
\[
\dd \mathscr{L}(D_X Y)= \frac{\p \mathscr{L}}{\p v^\alpha} (X^\beta \frac{\p Y^\alpha}{\p v^\beta})= X^\beta \frac{\p }{\p v^\beta} (\frac{\p \mathscr{L}}{\p v^\alpha} Y^\alpha)-X^\beta Y^\alpha  \frac{\p^2 \mathscr{L}}{\p v^\alpha \p v^\beta}=-g_v(X,Y)
\]
we arrive at
\begin{equation}
g_{v}(X,Y)=h(X,Y) 2 \mathscr{L}.
\end{equation}

Summarizing,  on the spacetime indicatrix $2\mathscr{L}=1$ we have:
\begin{proposition}
The   Finsler metric $g_v$ induces on the spacetime indicatrix $\mathscr{I}^+_x$ a metric which coincides with the affine metric of the indicatrix, where the indicatrix is  regarded as a hypersurface of the affine space $T_xM$ with centro-affine transverse field $-v$ ($v$ is the Liouville vector field).
\end{proposition}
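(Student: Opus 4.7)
The plan is to harvest the statement directly from the calculation performed in the paragraph preceding the proposition, which has already done essentially all the work; the proof amounts to assembling those pieces and recording two transversality observations.

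First I would verify that the Liouville vector field $v$ is genuinely transverse to $\mathscr{I}^+_x$ at every point $v\in \mathscr{I}^+_x$. By Euler's relation $\dd \mathscr{L}(v)=\frac{\p \mathscr{L}}{\p v^\alpha}v^\alpha=2\mathscr{L}=1\ne 0$ on the indicatrix, so $v\notin T_v\mathscr{I}^+_x$. This legitimizes $-v$ as a centro-affine transverse field and the Gauss decomposition $D_X Y=\nabla_X Y-h(X,Y)v$ for tangent extensions of $X,Y$ to a neighborhood, which is itself the defining equation of the induced connection $\nabla$ and the affine metric $h$ in the sense of Nomizu--Sasaki.

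Next I would apply $\dd \mathscr{L}$ to both sides of the Gauss equation. The left-hand side has already been computed to equal $-g_v(X,Y)$ via the Leibniz rule together with the vanishing of $\frac{\p \mathscr{L}}{\p v^\alpha} Y^\alpha$ along the indicatrix. On the right-hand side, $\dd \mathscr{L}(\nabla_X Y)=0$ because $\nabla_X Y$ is tangent to the level set $\{2\mathscr{L}=1\}$, while the transverse term contributes $-h(X,Y)\,\dd\mathscr{L}(v)=-2\mathscr{L}\,h(X,Y)$. Equating the two sides yields $g_v(X,Y)=2\mathscr{L}\,h(X,Y)$; restricting to $\mathscr{I}^+_x$, where $2\mathscr{L}=1$, gives the claimed identification $g_v|_{T\mathscr{I}^+_x}=h$.

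I do not anticipate any genuine obstacle. The only subtlety worth mentioning is that the identification is independent of the particular tangential extension of $X,Y$ used to make sense of $D_XY$: this is immediate because $\dd \mathscr{L}$ applied to the Gauss decomposition depends only on pointwise values of the arguments, and both $h(X,Y)$ and $g_v(X,Y)$ are intrinsic bilinear forms on $T_v\mathscr{I}^+_x$. Neither the Lorentzian signature of $g_v$ nor the convexity of the cone plays any role in this purely algebraic identification.
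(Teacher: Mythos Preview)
Your proposal is correct and follows essentially the same argument as the paper: the paper's proof is precisely the short computation in the paragraph preceding the proposition, namely checking transversality via Euler's relation, applying $\dd\mathscr{L}$ to the Gauss decomposition, and reading off $g_v(X,Y)=2\mathscr{L}\,h(X,Y)$, which on $\mathscr{I}^+_x$ reduces to $g_v=h$. Your added remarks on well-definedness are harmless elaborations, not departures.
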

Observe that the tangent space to the indicatrix at $v\in \mathscr{I}^+_x$ is $\textrm{ker} \, g_v(v,\cdot)$, and since the transverse field $v$ is $g_v$-spacelike, $g_v(v,v)=2\mathscr{L}(x,v)=1$, the affine metric on the spacetime indicatrix is Lorentzian.
Thus if we are given a Finsler Lagrangian $\mathscr{L}$ on a sharp cone $J^+$, its extension on the slit tangent bundle can be accomplished through the construction of a spacetime indicatrix with Lorentzian affine metric.

The next result is related to a known convenient parametrization of the indicatrix \cite{loftin02}.
\begin{theorem} \label{lpo}
Let ${\bf t}$ be  projective coordinates on $T_x M\cap\{v: v^0>0\}$ so that $v=(v^0,{\bf v})=-\frac{1}{u}\,(1,{\bf t})$. The spacetime indicatrix $\mathscr{I}^+_x$, regarded has an embedding $\sigma:{\bf t} \to -\frac{1}{u({\bf t})}\,(1,{\bf t})$ has affine metric relative to the transverse field $-v$ given by
\[
h= \frac{u_{ij}}{u}\, \dd t^i \dd t^j .
\]
\end{theorem}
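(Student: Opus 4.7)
My plan is to parametrize the indicatrix by the given embedding $\sigma({\bf t})=-\frac{1}{u({\bf t})}(1,{\bf t})$ and compute its ordinary (ambient, affine) second derivative $D_{\partial_i\sigma}\partial_j\sigma = \partial_i\partial_j\sigma$. The Gauss equation established earlier, $D_X Y = \nabla_X Y - h(X,Y)\,v$, then identifies the affine metric as minus the coefficient of $v$ when $\partial_i\partial_j\sigma$ is decomposed along the transverse field $-v$. Thus the task reduces to expressing $\partial_i\partial_j\sigma$ in the adapted frame $\{v,\partial_1\sigma,\ldots,\partial_n\sigma\}$ on $\sigma(U)$ and reading off the transverse component.

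To set up this frame I would first differentiate $\sigma$ once. Using $(1,{\bf t})=-u\,v$, a short calculation gives
$$\partial_i\sigma=-\frac{u_i}{u}\,v-\frac{1}{u}(0,{\bf e}_i).$$
Inverting this relation yields the pivotal identity $(0,{\bf e}_i)=-u\,\partial_i\sigma-u_i\,v$, which is the tool for rewriting any ambient vector in the adapted frame.

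Next I would differentiate once more. A direct computation of $\partial_j\partial_i\sigma$ in the original affine coordinates produces
$$\partial_j\partial_i\sigma=\Big(\frac{u_{ij}}{u^2}-\frac{2u_iu_j}{u^3}\Big)(1,{\bf t})+\frac{u_i}{u^2}(0,{\bf e}_j)+\frac{u_j}{u^2}(0,{\bf e}_i).$$
Substituting $(1,{\bf t})=-u\,v$ and the identity for $(0,{\bf e}_i)$, the three contributions involving $u_iu_j/u^2$ combine as $2-1-1=0$ and drop out, leaving
$$\partial_j\partial_i\sigma=-\frac{u_{ij}}{u}\,v-\frac{u_i}{u}\,\partial_j\sigma-\frac{u_j}{u}\,\partial_i\sigma.$$
Comparing with the Gauss decomposition, the tangential part supplies the induced connection $\nabla$, while matching the coefficient of $v$ gives $h(\partial_i\sigma,\partial_j\sigma)=u_{ij}/u$, whence $h=(u_{ij}/u)\,dt^i\,dt^j$ as claimed.

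There is no real obstacle in this argument; it is a mechanical calculation in affine coordinates. The one step deserving attention is the cancellation of the first-order terms $u_iu_j/u^2$, which is precisely what ensures that the result involves only the Hessian $u_{ij}$ and not first derivatives of $u$. This cancellation is the analytic footprint of the \emph{centro-affine} character of the transverse field $-v$: were a non-centro-affine normal chosen, surviving first-derivative terms would spoil the clean Hessian form and the projective invariance reflected in it.
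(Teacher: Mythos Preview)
Your proof is correct and is essentially identical to the paper's: both compute $\partial_i\sigma$ and $\partial_i\partial_j\sigma$ in ambient affine coordinates, rewrite the result in the frame $\{v,\partial_1\sigma,\dots,\partial_n\sigma\}$, and read off $h_{ij}=u_{ij}/u$ from the coefficient of $-v$ via the Gauss equation. The only differences are cosmetic---you write $(0,{\bf e}_i)$ where the paper writes $\partial_i(1,{\bf t})$, and you make the cancellation of the $u_iu_j/u^2$ terms more explicit.
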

\begin{proof}
Let us observe that
\[
\sigma_{*}(e_j)=D_{\sigma_{*}(e_j)} v=\p_j \{-\frac{1}{u({\bf t})}\,(1,{\bf t})\}=-\frac{u_j}{u} \, v-\frac{1}{u({\bf t})}\, \p_j(1,{\bf t}) ,
\]
where $\p_j$ is a shorthand for $\p/\p {t^j}$. Thus
\begin{align*}
D_{\sigma_{*}(e_i)} \sigma_{*}(e_j)&=D_{\sigma_{*}(e_i)}D_{\sigma_{*}(e_j)} v=\p_i \p_j\{-\frac{1}{u({\bf t})}\,(1,{\bf t})\}\\
&=\frac{u_{ij}}{u}\, (-v)+\frac{u_j}{u^2}\, \p_i(1,{\bf t})+\frac{u_i}{u^2}\, \p_j(1,{\bf t})+2\frac{u_iu_j}{u^2}\, v\\
&=-\frac{u_i}{u}\,\sigma_{*}(e_j)-\frac{u_j}{u}\,\sigma_{*}(e_i)+ \frac{u_{ij}}{u}\, (-v).
\end{align*}
The first two terms are tangent to the indicatrix, thus the last one gives the affine metric.
\end{proof}

\begin{remark}
Of course the coordinate $v^0$ does not play any privileged role here. For instance an analogous statement holds on the region $v^1>0$ for similarly introduced projective coordinates. This result tells us that the Lorentzianity of the affine metric of the indicatrix can be read from a certain Hessian related to the projective radius.
\end{remark}

The Lorentzianity of the affine metric on the indicatrix will be obtained passing through the convexity of the level sets of its graph. For this reason we shall need a lemma on convexity.
We recall that  a $C^2$ function is convex (strictly convex) iff the Hessian is positive semidefinite (resp.\ definite). A quasi-convex function is a function for which the sublevel sets are convex. With $B(p,r)\subset \mathbb{R}^n$ we denote the open ball centered at $p$ of radius $r$.

\begin{lemma} \label{mps}
Let $\Omega \subset \mathbb{R}^n$ be a  compact convex neighborhood of the origin  and let $A$ be a  closed (possibly empty) set such that $A\subset \textrm{Int}\,\Omega$. Let $f$ be a  $C^k$, $2\le k\le \infty$, function on $\Omega$ which is strictly convex on $\Omega -A$. Then there are constants $a,b,r>0$, $\Omega \subset B(0,r)$, such that  $f$ can be extended to a $C^k$   function on $\mathbb{R}^n$,  strictly convex on $\mathbb{R}^n-A$, in such a way that $f(x)=-a+b \Vert x\Vert^2$ for $\Vert x\Vert\ge r$.
\end{lemma}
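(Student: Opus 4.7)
Plan: I would combine Whitney's extension theorem with a smoothed-maximum gluing to a paraboloid $p(x) = -a + b\|x\|^2$. The first step is to extend $f$ to $\mathbb{R}^n$ in a way that preserves strict convexity outside $A$ on some neighborhood of $\Omega$, and the second is to blend the extension with $p$ using a $C^\infty$ convex analogue of the $\max$ function so that the final $F$ agrees with $f$ on $\Omega$ and with $p$ outside a ball.

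First, apply Whitney's theorem to obtain a bounded $C^k$ extension $\tilde f$ of $f$ to $\mathbb{R}^n$ (e.g.\ via Stein's version, or by multiplying a naive extension by a compactly supported smooth cutoff equal to $1$ near $\Omega$). Since $A\subset \textrm{Int}\,\Omega$ is closed, pick $\epsilon_0>0$ with the $\epsilon_0$-neighborhood $U(A,\epsilon_0)\subset\textrm{Int}\,\Omega$; then $\nabla^2 f\succeq\delta_0 I$ on the compact set $\Omega\setminus U(A,\epsilon_0)\subset\Omega\setminus A$, and by continuity of $\nabla^2\tilde f$ positive-definiteness persists on a compact neighborhood $\Omega_1\supset\Omega$. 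Combined with $\nabla^2\tilde f=\nabla^2 f\succ 0$ on $U(A,\epsilon_0)\setminus A\subset \Omega$, this yields $\nabla^2\tilde f\succ 0$ on all of $\Omega_1\setminus A$.

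Next, fix a smooth non-decreasing convex function $\psi\colon\mathbb{R}\to\mathbb{R}$ with $\psi(t)=0$ for $t\le 0$, $\psi(t)=t-c$ for $t\ge 1$ (some $c\in(0,1)$ forced by $\psi$), and $\psi''>0$ on $(0,1)$, and set $F(x)=\tilde f(x)+\psi(p(x)-\tilde f(x))$. Choose $b$ large and $a$ of order $bR_0^2$ (with $R_0=\max_{x\in\Omega}\|x\|$) so that $f\ge p$ on $\Omega$ (hence $F=f$ there), and so that $\{\tilde f\ge p-1\}\subset\Omega_1$; the latter follows from the bound $|\tilde f|\le M$, which confines this sublevel set to a ball of radius $\sqrt{(a+M+1)/b}$, approaching $R_0$ as $b\to\infty$. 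On $\{\tilde f\le p-1\}$ one has $F = p-c = -(a+c)+b\|x\|^2$, giving the paraboloid form outside some $B(0,r)$. Computing
\[
\nabla^2 F = (1-\psi')\nabla^2\tilde f + 2b\psi' I + \psi''\,(\nabla p-\nabla\tilde f)(\nabla p-\nabla\tilde f)^T,
\]
strict convexity on $\mathbb{R}^n\setminus A$ follows: on $\{\tilde f\ge p\}\setminus A\subset \Omega_1\setminus A$ this reduces to $\nabla^2\tilde f\succ 0$; on $\{\tilde f\le p-1\}$ to $2bI\succ 0$; in the transition region (contained in $\Omega_1$) the convex combination of the PSD $(1-\psi')\nabla^2\tilde f$ with the PD $2b\psi' I$ is PD since $\psi'>0$, while the PSD rank-one $\psi''$-term only helps.

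The main obstacle is securing the containment $\{\tilde f\ge p-1\}\subset\Omega_1$ in the parameter-selection step: the natural $\Omega_1$ supplied by the continuity argument is only a thin tubular neighborhood of $\Omega$, so it need not contain a ball of radius $\approx R_0$, whereas the estimate forces the containing ball to have radius approaching $R_0$ from above. When $\Omega$ is not essentially a round ball, I would handle this via a preliminary iterative enlargement: successive Whitney extensions in thin shells extend $f$ strictly convexly to larger convex sets $\Omega+B(0,\delta_i)$, accumulating until the domain contains $\bar B(0,R_0+\epsilon)$ for some $\epsilon>0$, after which the main construction proceeds as above on this enlarged domain.
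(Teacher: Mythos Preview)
Your smoothed-max construction $F=\tilde f+\psi(p-\tilde f)$ and the Hessian calculation are fine, and you have put your finger on the real obstruction: the transition set $\{p-1<\tilde f<p\}$ lives in a round annulus of radius $\approx R_0=\max_{\Omega}\|x\|$, whereas the strip $\Omega_1\setminus\Omega$ on which the Whitney extension is guaranteed to stay strictly convex is only a thin tubular collar of $\Omega$. For a non-round $\Omega$ these need not nest.

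The gap is in your fix. Iterating Whitney extensions through shells $\Omega+B(0,\sum_{j\le i}\delta_j)$ gives no control on the widths $\delta_i$: each $\delta_i$ depends on the modulus of continuity of the Hessian of the $i$-th extension \emph{outside} $\Omega_i$, and Whitney's theorem says nothing quantitative there. Nothing prevents $\sum_i\delta_i$ from converging well short of $R_0$, so the iteration may never reach $\bar B(0,R_0+\epsilon)$. What you are implicitly assuming---that a $C^k$ function strictly convex on a compact convex body admits a $C^k$ strictly convex extension to all of $\mathbb{R}^n$ (or at least to any prescribed ball)---is precisely the nontrivial step. The paper does not reprove it but invokes Min Yan's theorem \emph{at the outset}: one first obtains a $C^k$ extension strictly convex on all of $\mathbb{R}^n\setminus A$, and only then glues to a paraboloid via the additive scheme $h=\alpha f+2bg$ with a compactly supported cutoff $\alpha$ and a radial $g(x)=\int_0^{\|x\|}s\beta(s)\,ds$. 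With global strict convexity already in hand, the gluing region can be placed on any large sphere and the geometric mismatch you identified never arises. You should either cite Min Yan's result (after which your own gluing would also go through) or supply a genuine argument bounding the $\delta_i$ from below; the bare iteration does not.
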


The  proof is based on the elaboration of some nice ideas on the extension of convex $C^2$ functions due to Min Yan \cite{minyan13}.

\begin{proof}
By Whitney's extension theorem \cite[Theor.\ I]{whitney34}  $f$ can be extended to $\mathbb{R}^n$ preserving its differentiability properties. Let $r_1>0$ be such that $\Omega_1\subset B(0,r_1)$. According to a recent result by  Min Yan \cite[Theor.\ 4.4]{minyan13} $f$ can actually be extended to a $C^k$  function \footnote{The $C^k$, $k>2$, case is not mentioned in \cite{minyan13} but follows immediately from the proof.} on $\mathbb{R}^n$, strictly convex on  $\mathbb{R}^n-A$. From now on we denote with $f$ this extension.

Let $0<\epsilon<r_1$ and let $\alpha\colon \mathbb{R}^n \to [0,1]$ be a smooth function such that
$\alpha=1$ on $\bar{B}(0,2r_1)$ and $\alpha=0$ outside $B(0,2r_1+\epsilon)$.
Let  $\beta\colon [0,+\infty) \to [0,1]$ be a smooth increasing function such that $\beta(s)=0$, for $s\in [0,2r_1-\epsilon]$, $\beta(s), \beta'(s)>0$ for $2r_1-\epsilon<s<3r_1$, and $\beta =1$ in a neighborhood of $[4r_1,\infty)$. Let $\gamma\colon [0,+\infty) \to \mathbb{R}$ be defined by
\[
\gamma(t)=\int_0^t s \beta(s)\, d s.
\]
Finally, let
\[
g(x)=\gamma(\Vert x\Vert),
\]
then
\[
h=\alpha f+ 2b g,
\]
is the searched $C^k$  extension, where $b>0$ is a  sufficiently large constant.
It is an extension because  $g=0$ on $\bar{B}(0,2r_1-\epsilon)$ where $\alpha=1$. For $v\in T_x\mathbb{R}^n$
\[
Hg(v,v)=\frac{1}{\Vert x\Vert}\Big\{ \gamma'(\Vert x\Vert) \Vert v\Vert^2+\beta'(\Vert x\Vert)(x\cdot v)^2\Big\},
\]
thus the Hessian of $g$ is positive definite outside $\bar{B}(0,2r_1-\epsilon)$ and positive semidefinite in $\bar{B}(0,2r_1-\epsilon)$. Now, $h$ has positive definite Hessian on $\bar{B}(0,2r_1)-A$ since  it coincides with $f+bg$ on $\bar{B}(0,2r_1)$. It has positive definite Hessian on the compact set $\bar{B}(0,2r_1+\epsilon)-{B}(0,2r_1)$ for sufficiently large $b$, since the Hessian of $\alpha f$ is bounded from below there. Finally, it has positive Hessian outside $\bar{B}(0,2r_1+\epsilon)$ since it coincides with $bg$ there.
Observe that $\gamma(t)=\frac{1}{2}t^2-\int_0^t s (1-\beta(s))\, d s$ where the last integral is non-negative. Thus since $\beta=1$ in a neighborhood of $[4r_1,+\infty)$ we have that $\gamma(t)=\frac{1}{2}t^2-a/(2b)$ there for some constant $a>0$. As a consequence, $h=b \Vert x\Vert^2-a$ for $\Vert x\Vert\ge r:= 4r_1$ and renaming $h\to f$ the theorem is proved.
\end{proof}

We are ready to prove our equivalence theorem.
 We have not been able to accomplish reversibility of the extended Lagrangian so $J^-$ might be different from $-J^+$.

\begin{proof}[Proof of theorem \ref{odd}]
Let $u\in I^+=\textrm{Int} J^+$ be a future-directed timelike vector field, and let $N\subset TM$ be a linear codimension one subbundle of the tangent bundle such that $N_x$ does not intersect $J_x$. The affine space $u+N_x$ intersects $J^+$ in a convex set $C_x$ with $C^k$  boundary. Let us introduce an auxiliary Riemannian metric on $M$ in such a way that $u$ has unit norm (recall that $M$ admits a Riemannian metric $\tilde{\gamma}$, see \cite{nomizu61}, thus $\gamma=\tilde \gamma/\tilde\gamma(u,u)$ would do). Thus we can regard $T_xM$ as a Euclidean space where the scalar product is given by $\gamma_x$. This scalar product will be denoted with a dot and we shall denote the induced norm with $\Vert \, \Vert$. From now on we shall focus on a single tangent space $T_xM$ and on its spacetime indicatrix. In order to get the extension over the whole manifold several constants introduced in the following argument should be regarded as smooth functions of $x$ although the dependence on $x$ will not be mentioned.

Let us introduce coordinates $(v^0,v^1,\cdots,v^{n})$ over  $T_xM$ so that $u=\p_0$, $N_x=span\{\p_1,$ $\cdots,\p_{n}\}$, and the basis $\{\p_1,\cdots,\p_{n}\}$ of $N_x$ is orthonormal. In this way $u+N_x=(v^0)^{-1}(1)$. Let $\tilde{\mathscr{L}}=\mathscr{L}\vert_{v^0=1}$ so that
\begin{equation} \label{mox}
\mathscr{L}(v)=(v^0)^2 \tilde{\mathscr{L}}({\bf v}/v^0).
\end{equation}
Since $\mathscr{L}$ is defined so far only over $J_x$, $\tilde{\mathscr{L}}$ is defined so far only over $C_x$ but we are now going to extend it over $N_x$.

We know (Remark \ref{dih}) that $\mathscr{L}$ can be extended over a cone neighborhood of $J^+$ preserving the Lorentzianity of the Hessian, so  $\tilde{\mathscr{L}}$ can be also extended in a neighborhood of $C_x$.

We shall first show that $\tilde{\mathscr{L}}$ can be redefined outside $C_x$ preserving its level sets, but so as to make it strictly convex. In this way we will be able to apply Lemma \ref{mps} and hence obtain a special quadratic behavior at large distances. This simple dependence will be important in order to construct $\mathscr{L}$ in the region $v^0\le 0$. Then in a last step we shall show how to redefine $\tilde{\mathscr{L}}$ in an intermediate region so as to get Lorentzianity of the Hessian of $\mathscr{L}$ everywhere.

If $v\in \p C_x$ then $\dd \mathscr{L}=g_v(v,\cdot)$ where $v$ is a lightlike vector, the ker of this one-form being  $T_vE^+_x$. Since $u+N_x$ is transverse to $E^+_x$, $\dd \tilde{\mathscr{L}}\ne 0$ on $\p C_x$.
Since $E^+_x$ is a null hypersurface in the spacetime $(T_xM,g_v)$ its induced space metric must be positive definite, thus if $X\in T_v\p C_x$, $X\ne 0$, as $X \not\propto v$, we have
\[
0<g_v(X,X)=H{\mathscr{L}}(X,X)=H\tilde{\mathscr{L}}(X,X).
\]
In other words the Hessian of $\tilde{\mathscr{L}}$ is positive definite when restricted to the tangent space to $\p C_x$. Let $0<\varepsilon<1$ be so
small that $\mathscr{L}$ has Lorentzian Hessian over $C_x^{3\varepsilon}:=\tilde{\mathscr{L}}^{-1}((-\infty,3\varepsilon)])$, and $\tilde{\mathscr{L}}$ has strictly convex compact sublevel sets. In particular,  $\tilde{\mathscr{L}}$ is quasi-convex on $C_x^{3\varepsilon}$. Let $h\colon [2\varepsilon, 3 \varepsilon]\to \mathbb{R}$ be a smooth increasing convex function
dependent on a parameter $c>0$ such that it coincides with $x$ in a neighborhood of $2\varepsilon$, it has first derivative at $3\epsilon$ bounded by a constant independent of $c$ and it has second derivative at $3\epsilon$ equal to $c$.
Then for sufficiently large $c>0$, $h(\tilde{\mathscr{L}})$ is strictly convex in a relatively compact inner neighborhood $C_x^{3\varepsilon}-A$, $A\supseteq C_x^{2\varepsilon}$, of $\p C_x^{3\varepsilon}$, while it remains quasi-convex on $C_x^{3\varepsilon}$.

Applying Lemma \ref{mps} with $\Omega=C_x^{3\varepsilon}$ we get that $\tilde{\mathscr{L}}$ on $C_x^{2\varepsilon}$ can be extended over $u+N_x$ to a quasi-convex function which coincides with $b {\bf v}^2-a$ for sufficiently large $\Vert{\bf v}\Vert$. Let ${\bf t}$ be projective coordinates on $T_xM$ so that $v=(v^0,{\bf v})=v^0(1, {\bf t})$. The graphing function $v^0({\bf t})$ determining the indicatrix $2\mathscr{L}=1$ on the region $v^0>1$ is obtained from Eq.\ (\ref{mox}), $v^0({\bf t})=1/\sqrt{2\tilde{\mathscr{L}}({\bf t})}$.
In the notations of Theorem \ref{lpo} we have  $u=-\sqrt{2\tilde{\mathscr{L}}}$ thus we have only to show that $\sqrt{\tilde{\mathscr{L}}}$ has  Lorentzian Hessian.

For some $r>0$, and $\Vert {\bf t}\Vert>r$ this is so because $\tilde{\mathscr{L}}=b {\bf t}^2-a$ (or because $\mathscr{L}=-a (v^0)^2+b {\bf v}^2$). However we shall need to change $\tilde{\mathscr{L}}$ in an intermediate region.

For ${\bf t}\in C_x^{2\varepsilon}-C_x$  Lorentzianity follows from the fact that ${\mathscr{L}}$ is Lorentzian on the corresponding convex cone neighborhood of $J_x^+$. Observe that $\p C^\delta$, $0\le \delta\le 2\varepsilon$, is the boundary of a strictly convex set, thus as they are level sets for $\sqrt{\tilde{\mathscr{L}}}$, they are spacelike hypersurfaces on $u+N_x$ for the metric given by the Hessian of $\sqrt{\tilde{\mathscr{L}}}$, and so $\sqrt{\tilde{\mathscr{L}}}$ plays the role of a time function there.

As $\tilde{\mathscr{L}}$ is increasing and convex outside $C^{3\varepsilon}_x$ it goes to infinity on every radial line of $u+N_x$ starting from ${\bf t}=0$. We wish to prove the Lorentzianity of $H\sqrt{\tilde{\mathscr{L}}}$ on a compact set $\tilde{\mathscr{L}}^{-1}([\varepsilon^2,R^2])\supset \{{\bf t}: \Vert {\bf t}\Vert\le r\}-C^{2\varepsilon}_x$. The idea is to modify $\sqrt{\tilde{\mathscr{L}}}$ through composition with an increasing function $F$ so as to preserve its convex sublevel sets but adjust its Hessian.
Let $X$ be a vector field on $(u+N_x)\backslash C_x$ transverse to $\p C^\sigma_x$, $\sigma\ge \epsilon$, such that $\p_X\tilde{\mathscr{L}}>0$, $X$ is timelike with respect to $H\sqrt{\tilde{\mathscr{L}}}$ on $C^{2\varepsilon}_x- C_x$  and $X$ coincides with ${\bf t}$  for $\Vert {\bf t} \Vert \ge r$. Let
\[
\beta=\sup_{\tilde{\mathscr{L}}^{-1}([\varepsilon^2, R^2])} \frac{H\sqrt{\tilde{\mathscr{L}}}(X,X)}{\big(\p_X \sqrt{\tilde{\mathscr{L}}}\,\big)^2}.
\]
Let $\alpha\colon [0,+\infty)\to [0,+\infty)$ be a smooth function which vanishes in $[0,\varepsilon/2]$ and in a neighborhood of $[2R,+\infty)$, and which is larger than $\beta$ on $[\varepsilon, R]$. Let
\[
F(x)= \int_0^x e^{-\int_0^y \alpha(z) \dd z} \dd y.
\]
Note that $F'>0$, $F=x$ on $[0,\varepsilon/2]$, while $F=A+B x$ with $A,B>0$ in a neighborhood of $[2R,+\infty)$. The inequality $A>0$ is easily checked from
\[
\int_0^{2R} \big(e^{\int_0^{2R}\alpha \dd z-\int_0^{y}\alpha \dd z}-1\big) \dd y>0 \Rightarrow \int^{2R}_0 e^{-\int_0^{y}\alpha \dd z} \dd y>2R e^{-\int_0^{2R}\alpha \dd z}\Rightarrow A>0.
\]
The equality $F=x$ assures that $F\circ \sqrt{\tilde{\mathscr{L}}}$ coincides with $\sqrt{\tilde{\mathscr{L}}}$ in a neighborhood of $C_x$.
Observe that
\[
H\big(F\circ \sqrt{\tilde{\mathscr{L}}}\,\big)(Y,Y)=F' H\sqrt{\tilde{\mathscr{L}}}(Y,Y)+F'' \big(\p_Y \sqrt{\tilde{\mathscr{L}}}\,\big)^2,
\]
so when $Y$ is tangent to the level sets of $F\circ \sqrt{\tilde{\mathscr{L}}}$ (or $\tilde{\mathscr{L}}$) we have by their strict convexity $H\big(F\circ \sqrt{\tilde{\mathscr{L}}}\, \big)(Y,Y)>0$, while when $Y=X$
\[
H\big(F\circ \sqrt{\tilde{\mathscr{L}}}\, \big)(X,X)=F' \bigg\{\frac{H\sqrt{\tilde{\mathscr{L}}}(X,X)}{(\p_X\sqrt{\tilde{\mathscr{L}}})^2}-\alpha\bigg\}\big(\p_X\sqrt{\tilde{\mathscr{L}}}\, \big)^2.
\]
Thus the Hessian is Lorentzian on the whole region $\mathscr{L}\ge 0$ (notice that $H\sqrt{\tilde{\mathscr{L}}}(X,X)<0$ in $C^{2 \varepsilon}_x- C_x$). Redefining $F\circ \sqrt{\tilde{\mathscr{L}}}\to \sqrt{\tilde{\mathscr{L}}}$ we get Lorentzianity of $\mathscr{L}$ all over the half-space $v^0>0$.

Observe that outside a cone containing the positive $v^0$-axis $\mathscr{L}$ is rotationally symmetric around that axis. As its level sets are round,  in order to infer the Lorentzianity of the indicatrix it is sufficient to check that the intersection between the indicatrix and the $v^0-v^1$ plane is strictly convex when expressed as a function $v^1(v^0)$.
Outside the mentioned cone
\[
\mathscr{L}=(v^0)^2\Big(A+B\sqrt{-a+b ({\bf v}/v^0)^2}\, \Big)^2 ,
\]
thus
\[
v^1(v^0)=\Big[\frac{a}{b}+\frac{1}{B^2b}\Big(\frac{1}{\sqrt{2}}-A v^0\Big)^{\!2}\Big]^{1/2}, \quad v^0>0.
\]
This function is convex for $v^0>0$ but it does not match with that obtained for $v^0<0$ through reflection. However, since $\p v^1/\p v^0\vert_{v^0\to 0}<0$ it is possible to continue the function for $v^0\le 0$ (and modify it slightly near $v^0=0$ if needed) while preserving strict convexity so as to match $v^1=\sqrt{d+e(v^0)^2}$, $d>0$, for some sufficiently large $e>0$, on $v^0 <0$ for sufficiently large $\vert v^0\vert$. In this way $\mathscr{L}$ has Lorentzian Hessian everywhere and coincides with a quadratic function $\mathscr{L}=[-e(v^0)^2+{\bf v}^2]/(2d)$ in a neighborhood of the past causal cone. The other statement of the theorem are immediate from the construction or follow from the general results of \cite{minguzzi13c}.
\end{proof}
\section{Conclusion}

We have shown that in sufficiently regular cone Lagrangian theories one can first pass to a slit tangent space Lagrangian, safely use the concept of convex neighborhood and space geodesic, and so translate almost effortlessly several Lorentzian results into their Finsler-Lorentzian analog.
One can also use the standard theory of Finsler connections over the slit bundle $TM\backslash 0$. This result proves the essential uniqueness of sufficiently regular Finsler spacetime theories.

Essentially, one is really forced to work on the cone only if the theory is sufficiently rough at the boundary of the cone, in which case one should be ready to adopt a rough Finslerian theory for the lightlike geodesics. In  this case there is one direction which is particularly worth exploring.  I refer to rough Finsler spacetimes which admit a vanishing mean Cartan torsion, or equivalently,  affine sphere indicatrices \cite{minguzzi14c}. They will be duly investigated in a different work.

\section*{Acknowledgments}   This work has been
partially supported by GNFM of INDAM.  The preprint version of this work appeared in arXiv:1412.4228.


\end{document}